\newtheorem{theorem}{Theorem}[section]
\newtheorem{lemma}[theorem]{Lemma}
\newtheorem{meta-theorem}[theorem]{Meta-Theorem}
\newtheorem{remark}[theorem]{Remark}
\newtheorem{corollary}[theorem]{Corollary}
\newtheorem{definition}[theorem]{Definition}
\definecolor{darkgreen}{rgb}{0,0.5,0}
\crefname{theorem}{Theorem}{Theorems}
\Crefname{lemma}{Lemma}{Lemmas}
\Crefname{figure}{Figure}{Figures}
\Crefname{claim}{Claim}{Claims}
\Crefname{observation}{Observation}{Observations}
\Crefname{definition}{Defintion}{Definitions}
\algnewcommand\algorithmicswitch{\textbf{switch}}
\algnewcommand\algorithmiccase{\textbf{case}}
\newcommand{\eps}{\varepsilon}
\newcommand{\local}{$\mathsf{LOCAL}$\xspace}
\newcommand{\pram}{$\mathsf{PRAM}$\xspace}
\newcommand{\mpc}{$\mathsf{MPC}$\xspace}
\newcommand{\poly}{\operatorname{\text{{\rm poly}}}}
\newcommand{\arb}{\lambda}
\newcommand{\param}{d}
\renewcommand{\paragraph}[1]{\vspace{0.15cm}\noindent {\bf #1}:}
\newcommand{\FullOrShort}{full}
  \newcommand{\fullOnly}[1]{#1}
  \newcommand{\shortOnly}[1]{}
    \newcommand{\fullOnly}[1]{}
    \newcommand{\IncludePictures}[1]{}
\begin{document}

\title{Matching and MIS for Uniformly Sparse Graphs \\ in the Low-Memory \mpc Model}
\date{}
\author{
	Sebastian Brandt \\
	ETH Zurich \\
	brandts@ethz.ch
	\and
	 Manuela Fischer\\
   ETH Zurich \\
   manuela.fischer@inf.ethz.ch
	\and
	 Jara Uitto\thanks{supported by ERC Grant No. 336495 (ACDC)}\\
   ETH Zurich \& \\ University of Freiburg \\
	 jara.uitto@inf.ethz.ch
 }

\maketitle

\setcounter{page}{0}
\thispagestyle{empty}

\begin{abstract}
The \emph{Massively Parallel Computation (\mpc)} model serves as a common abstraction of many modern large-scale parallel computation frameworks and has recently gained a lot of importance, especially in the context of classic graph problems. Unsatisfactorily, all current $\poly (\log \log n)$-round \mpc algorithms seem to get fundamentally stuck at the linear-memory barrier: their efficiency crucially relies on each machine having space at least linear in the number $n$ of nodes. As this might not only be prohibitively large, but also allows for easy if not trivial solutions for sparse graphs, we are interested in the \emph{low-memory \mpc} model, where the space per machine is restricted to be strongly sublinear, that is, $n^{\delta}$ for any $0<\delta<1$. 

We study \emph{maximal matching} and \emph{maximal independent set} in this low-memory \mpc model. Our key ingredient is a degree reduction technique that reduces these problems in graphs with arboricity $\lambda$ to the corresponding problems in graphs with maximum degree $\poly(\arb)$ in $O(\log^2 \log n)$ rounds. This gives rise to $O\left(\log^2\log n + T(\poly \lambda)\right)$-round algorithms, where $T(\Delta)$ is the $\Delta$-dependency in the round complexity of maximal matching and maximal independent set in graphs with maximum degree $\Delta$. A direct simulation of the \local algorithms by Barenboim et al.\ [FOCS'12] and Ghaffari [SODA'16] leads to $T(\Delta)=O(\log \Delta)$, and a concurrent work by Ghaffari and Uitto shows that $T(\Delta)=O(\sqrt{\log \Delta})$. 

For graphs with arboricity $\arb=\poly(\log n)$, this almost exponentially improves over Luby's $O(\log n)$-round \pram algorithm [STOC'85, JALG'86], and constitutes the first $\poly (\log \log n)$-round maximal matching algorithm in the low-memory \mpc model, thus breaking the linear-memory barrier. Previously, the only known subpolylogarithmic algorithm, due to Lattanzi et al.\ [SPAA'11], required strongly superlinear, that is, $n^{1+\Omega(1)}$, memory per machine. 

Moreover, our maximal matching algorithm can be employed to obtain a $(1+\eps)$-approximate \emph{maximum cardinality matching}, a $(2+\eps)$-approximate \emph{maximum weighted matching}, as well as a $2$-approximate \emph{minimum vertex cover} in essentially the same number of rounds. 
\end{abstract}

\newpage

\section{Introduction}	
Due to the prevalence of huge networks, scalable algorithms for fundamental graph problems recently have gained a lot of importance in the area of parallel computing. 
The \emph{Massively Parallel Computation (\mpc)} model \cite{karloff2010model,goodrich2011sorting, beame2014skew, Andoni2014, beame2017communication, czumaj2017round} constitutes a common abstraction of several popular large-scale computation frameworks---such as MapReduce \cite{dean2008mapreduce}, Dryad \cite{isard2007dryad}, Spark \cite{zaharia2010spark}, and Hadoop \cite{white2012hadoop}---and thus serves as the basis for the systematic study of massively parallel algorithms.

While classic parallel (e.g., \pram) or distributed (e.g., \local) algorithms can usually be implemented straightforwardly in \mpc in the same number of rounds \cite{karloff2010model,goodrich2011sorting}, 
the additional power of local computation (compared to \pram) or of global communication (compared to \local) could potentially be exploited to obtain faster \mpc algorithms. 
Czumaj et al. \cite{czumaj2017round} thus ask:
\begin{center}
\begin{minipage}{0.95\linewidth}
\begin{mdframed}[hidealllines=true, backgroundcolor=gray!00]
\vspace{-0.5pt}
\emph{``Are the \mpc parallel round bounds ``inherited'' from the \pram model
tight? \\ In particular, which problems can be solved in significantly smaller number of \mpc rounds 
\\ than
what the [...] \pram model suggest[s]?''}  
\vspace{-0.5pt}
\end{mdframed}
\end{minipage}
\end{center}
Surprisingly, for \emph{maximal matching}, one of the most central graph problems in parallel and distributed computing that goes back to the very beginning of the area, the answer to this question is not known. In fact, even worse,
our understanding of the maximal matching problem in the \mpc model is rather bleak. 
Indeed, the only subpolylogarithmic\footnote{We are aware of an $O(\sqrt{\log n})$-round low-memory \mpc algorithm by Ghaffari and Uitto in a concurrent work \cite{GU18}. However, only algorithms that are significantly faster than their \pram counterparts---that is, smaller than any polynomial in $\log n$, and ideally at most a polynomial in $\log \log n$---are considered efficient.} algorithm by Lattanzi, Moseley, Suri, and Vassilvitskii \cite{lattanzi2011filtering} requires the memory per machine to be substantially superlinear in the number $n$ of nodes in the graph. This is not only prohibitively large and hence impractical for massive graphs, but also allows an easy or even trivial solution for sparse graphs, which both are often assumed to be the case for many practical graphs\cite{karloff2010model,czumaj2017round}. When the local memory is restricted to be (nearly) linear in $n$, the round complexity of Lattanzi et al.'s algorithm drastically degrades, falling back to the trivial bound attained by the simulation of the $O(\log n)$-round \pram algorithm due to Luby \cite{luby1985simple} and, independently, Alon, Babai, and Itai \cite{alon1986fast}. 

For other classic problems, such as \emph{Maximal Independent Set (MIS)} or  \emph{approximate maximum matching}, we have a slightly better understanding. There, the local memory can be reduced to be $\widetilde{O}(n)$ while still having $\poly (\log \log n)$-round algorithms \cite{assadi2017coresets,czumaj2017round,MPCMIS}. Yet, all these algorithms fail to go (substantially) below linear space without an (almost) exponential blow-up in the running time. 
It is thus natural to ask
whether there is a fundamental reason why the
known techniques get stuck at the linear-memory barrier
and study the \emph{low-memory} \mpc model with strongly sublinear space to address this question. 

\paragraph{Low-Memory \mpc Model for Graph Problems} We have $M$ machines with local memory of 
$S=O\left(n^{\delta}\right)$ words each, for some $0<\delta\leq 1$\footnote{Note that we do not require $\delta$ to be a constant. For the sake of simplicity of presentation, we decided to omit the $\delta$-dependency, which is a multiplicative factor of $O\left(\frac{1}{\delta}\right)$, in the analysis of the running time of our algorithms.}. A graph with $n$ nodes, $m$ edges, and maximum degree $\Delta$ is distributed arbitrarily across the machines. 
We assume the total memory in the system to be (nearly) linear in the input, i.e., $M\cdot S = \widetilde{\Theta}(m)$. 
The computation proceeds in rounds consisting of \emph{local computation} in all machines in parallel, followed by \emph{global communication} between the machines. We require that the total size of sent and received messages of a machine in every communication phase does not exceed its local memory capacity. The main interest lies in minimizing the number of rounds, aiming for $\poly (\log \log n)$. 

\newpage
In this low-memory \mpc model, 
the best known algorithms usually stem from straight forward simulations of \pram and \local algorithms, thus requiring at least polylogarithmic rounds. In the special case of trees, \cite{brandt2018breaking} managed to beat this bound 
by providing an $O(\log^3 \log n)$-round algorithm for MIS. The authors left it as a main open question whether there is a low-memory \mpc algorithm for general graphs in $\poly (\log \log n)$ rounds. 

We make a step towards answering this question by devising a degree reduction technique that reduces the problems of maximal matching and maximal independent set in a graph with arboricity $\lambda$ to the corresponding problems in graphs with maximum degree $\poly(\lambda)$ in $O\left(\log^2 \log n\right)$ rounds.

\begin{theorem}\label{thmDegRed}There is an $O\left(\log \log_{\Delta} n \cdot \log \log_{\lambda} \Delta\right)$-round low-memory \mpc algorithm that w.h.p.\footnote{As usual, w.h.p.\ stands for \emph{with high probability}, and means with probability at least $1-n^{-c}$ for any constant $c\geq 1$.} reduces maximal matching and maximal independent set in graphs with arboricity $\lambda=o(\poly (n))$\footnote{With $o(\poly(n))$ we mean subpolynomial in $n$, i.e., $o(n^c)$ for any constant $c>0$. Note that for $\lambda=\poly(n)$, our results for MIS and maximal matching follow directly, without any need for degree reduction.} to the respective problems in graphs with maximum degree $O\left(\max\{\lambda^{20}, \log^{20} n\}\right)$\footnote{The purpose of the choice of all the constants in this work is merely to simplify presentation.}. 
\end{theorem}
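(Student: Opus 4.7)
The plan is a nested peeling scheme that exploits the arboricity bound: since the total edge count is at most $\lambda n$, for any degree threshold $\tau$ there are at most $2\lambda n/\tau$ vertices above it. This elementary fact is what drives the doubly-logarithmic shrinkage in the maximum degree.

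Concretely, I would organize the algorithm as an outer loop of $O(\log \log_\lambda \Delta)$ phases, in which the maximum degree sequence satisfies $\Delta_0 = \Delta$ and $\Delta_{i+1} = \Delta_i^{1/2} \cdot \mathrm{poly}(\lambda, \log n)$, so that after these many phases the maximum degree drops to $O(\max\{\lambda^{20}, \log^{20} n\})$. In each phase, every vertex whose current degree exceeds $\Delta_{i+1}$ must have its degree reduced --- either by committing matching edges incident to it (for maximal matching) or by removing it together with its neighborhood whenever a sampled neighbor is selected (for MIS). A careful choice of sampling probabilities, combined with Chernoff/Talagrand-type concentration, should guarantee that the maximum degree drops by the targeted factor in each phase with high probability, while maintaining the invariant that the partial solution constructed so far extends to a maximal one.

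Each phase is implemented in $O(\log \log_{\Delta_i} n)$ MPC rounds via \emph{graph exponentiation}: each surviving vertex iteratively doubles the radius of its collected neighborhood until that neighborhood fills the local memory of size $n^\delta$; at that point the vertex can locally simulate the sampling and commitment decision in a single round. Since the radius doubles each round and the target local-view size is $n^\delta$, this takes $O(\log \log_{\Delta_i} n)$ rounds per phase, which multiplied by the phase count yields the round complexity claimed in the theorem.

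The main obstacle, I expect, is two coupled issues. First, preserving arboricity-driven progress across phases: the residual subgraph always has arboricity at most $\lambda$, but I must argue that each sampling round actually reduces the degree of most surviving high-degree vertices by the right factor, which requires showing that a constant fraction of their incident edges go to vertices of moderate degree --- a statement that itself hinges on Nash-Williams-style counting together with the polynomial slack hidden in the $\lambda^{20}$ and $\log^{20} n$ bounds (so that the concentration inequalities win against a union bound over all $n$ vertices). Second, implementing each step in strongly sublinear memory: since no single machine can even store a high-degree vertex's neighborhood, the sampling, aggregation, and concentration checks must be carried out via standard $O(1)$-round MPC sorting and prefix-sum primitives, and load-balancing must keep per-machine message volume within the $n^\delta$ budget. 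Here the arboricity bound $m \leq \lambda n$ is essential, as it ensures that the total memory $\widetilde{\Theta}(m)$ suffices throughout and that graph exponentiation does not blow up communication before the intended number of rounds is reached.
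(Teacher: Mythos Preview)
Your high-level architecture---an outer loop of $O(\log\log_\lambda \Delta)$ phases, each achieving a polynomial drop in the maximum degree and each implemented in $O(\log\log_\Delta n)$ rounds via graph exponentiation---matches the paper exactly. What is missing is the mechanism that actually makes a \emph{single} phase work, and this is the heart of the argument.

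The paper does not rely on ``a careful choice of sampling probabilities'' together with a Luby-style claim that high-degree vertices have many moderate-degree neighbors. Instead, each phase first computes an \emph{$H$-partition} with out-degree parameter $d=\Delta^{1/10}$: one peels off layers of vertices of remaining degree at most $d$, obtaining $\ell=O(\log_{d/\lambda} n)$ layers in which every vertex has at most $d$ neighbors in its own or higher layers. This orientation is what replaces your hoped-for structural fact about moderate-degree neighbors; it is deterministic and holds for every vertex, not just a constant fraction. The mark-and-propose step is then defined relative to this orientation (each vertex marks one outgoing edge, or marks itself with probability $d^{-2}$), and conflicts are resolved by sweeping the layers from top to bottom. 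The correctness lemma shows that any vertex with in-degree $\ge d^4$ is removed w.h.p., and since out-degrees are at most $d$ by construction, the remaining maximum degree is $O(d^4)=O(\Delta^{0.4})$. Your proposal never explains how a single sampling round would simultaneously avoid conflicts and hit every high-degree vertex; without the layered orientation and the top-down sweep, a one-shot sample would give at best a constant-factor (not polynomial) degree drop, collapsing your phase count to $O(\log_\lambda \Delta)$.

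The implementation side also hinges on the $H$-partition in a way your sketch does not capture. Graph exponentiation is not used so that ``each vertex simulates the sampling decision locally''; it is used to \emph{compute the $H$-partition itself} and then to simulate the $\ell$-round top-down sweep. The memory argument is not simply $m\le \lambda n$: it is that after removing the first few layers the number of surviving vertices shrinks by a factor $(d/(2\lambda))^i$ (Lemma~3.1 in the paper), which frees enough total memory to afford storing $2^i$-hop balls around each remaining vertex in iteration $i$. Your load-balancing paragraph gestures at this but does not identify the geometric drop in vertex count per layer as the reason the exponentiation fits in $n^\delta$ space.
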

This improves over the degree reduction algorithm by Barenboim, Elkin, Pettie, and Schneider \cite[Theorem 7.2]{barenboim2012locality}, which runs in $O(\log_{\lambda} \Delta)$ rounds 
 in the \local model and can be straightforwardly implemented in the low-memory \mpc model.

Our degree reduction technique, combined with the state-of-the-art algorithms for maximal matching and maximal independent set, 
gives rise to a number of low-memory \mpc algorithms, as overviewed next. Throughout, we state our running times based on a concurrent work by Ghaffari and Uitto \cite{GU18}, in which they prove that maximal matching and maximal independent set can be solved in $O\left(\sqrt{\log\Delta}+\log \log \log n\right)$ and $O\left(\sqrt{\log\Delta}+\sqrt{\log\log n}\right)$ low-memory \mpc rounds, respectively, improving over $O\left(\log\Delta+\log \log \log n\right)$ and $O\left(\log\Delta+\sqrt{\log\log n}\right)$, respectively, which can be obtained by a sped up simulation \cite{diam} of state-of-the-art \local algorithms \cite{barenboim2012locality,Ghaffari-MIS}. We apply these algorithms by Ghaffari and Uitto as a black box. 
By using earlier results, the term $O(\sqrt{\log \lambda})$ in the running times of our theorem statements would get replaced by $O(\log \lambda)$. 

\begin{theorem}\label{thmMM}
There is an $O\left(\sqrt{\log \arb}+\log\log n \cdot \log \log \Delta\right)$-round low-memory \mpc algorithm that w.h.p.\ computes a maximal matching in a graph with arboricity $\arb$.
\end{theorem}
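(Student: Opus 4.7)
The plan is to obtain \Cref{thmMM} as an essentially immediate composition of the degree-reduction tool in \Cref{thmDegRed} with the black-box maximal matching algorithm of Ghaffari and Uitto, so the proof will be more of a round-complexity accounting than a new algorithmic idea. Concretely, I would first invoke \Cref{thmDegRed} on the input graph $G$ (which has arboricity $\arb$ and maximum degree $\Delta$); by the theorem, in $O(\log\log_\Delta n \cdot \log\log_\arb \Delta) = O(\log\log n \cdot \log\log\Delta)$ rounds and w.h.p., this reduces the maximal matching problem on $G$ to the maximal matching problem on a graph $G'$ whose maximum degree satisfies $\Delta' = O(\max\{\arb^{20}, \log^{20} n\})$. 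Then I would run the Ghaffari--Uitto low-memory \mpc maximal matching algorithm on $G'$ as a black box, obtaining a maximal matching there in $O(\sqrt{\log \Delta'} + \log\log\log n)$ rounds w.h.p., and use the reduction's lift-back guarantee to assemble the two partial matchings into a maximal matching of the original graph $G$.

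For the round-complexity bookkeeping, I would split on whether the first or second term dominates inside $\Delta'$: if $\arb \geq \log n$, then $\sqrt{\log \Delta'} = O(\sqrt{\log \arb})$; otherwise $\sqrt{\log \Delta'} = O(\sqrt{\log\log n})$. In either case, $\sqrt{\log \Delta'} = O(\sqrt{\log \arb} + \sqrt{\log\log n})$, and since both $\sqrt{\log\log n}$ and the $\log\log\log n$ additive term are dominated by $\log\log n \cdot \log\log\Delta$ (using $\Delta \geq 2$ so that $\log\log\Delta \geq 1$), the total cost becomes
\[
O(\log\log n \cdot \log\log\Delta) + O(\sqrt{\log\arb} + \sqrt{\log\log n} + \log\log\log n) = O\!\left(\sqrt{\log\arb} + \log\log n \cdot \log\log\Delta\right),
\]
matching the claimed bound. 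The w.h.p.\ guarantee is preserved because only two randomized subroutines are composed and a union bound suffices.

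I do not expect a genuine obstacle beyond this composition: the real work lies in proving \Cref{thmDegRed}, which is assumed here. The only minor point to verify carefully is that \Cref{thmDegRed} outputs an \mpc instance of the reduced-degree problem that is distributed across the machines in a way compatible with the input format required by the Ghaffari--Uitto algorithm, and that the total memory remains $\widetilde{\Theta}(m)$ throughout; since the reduction only removes edges/vertices and produces an induced subproblem, this is straightforward but is the one sanity check I would perform explicitly before declaring the proof complete.
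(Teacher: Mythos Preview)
Your approach is the same as the paper's: apply the degree reduction (the paper invokes \Cref{degred}, which is equivalent to \Cref{thmDegRed}) and then run Ghaffari--Uitto on the low-degree remainder; your round-complexity bookkeeping also matches. There is one small omission, however: \Cref{thmDegRed} carries the hypothesis $\arb = o(\poly(n))$, so you cannot invoke it unconditionally. The paper handles the complementary case $\arb = n^{\Omega(1)}$ in one line by running Ghaffari--Uitto directly on the input, giving $O(\sqrt{\log \Delta} + \log\log\log n) = O(\sqrt{\log n}) = O(\sqrt{\log \arb})$ rounds, which already meets the claimed bound; you should add this case split before declaring the proof complete.
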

This improves over the $O\left(\log\arb+\sqrt{\log n}\right)$-round \local algorithm by \cite{barenboim2012locality} and the $O(\sqrt{\log \Delta}+\log \log \log n)$-round algorithm by \cite{GU18}. 
We get the first $\poly (\log \log n)$-round algorithm---and hence an almost exponential improvement over the state of the art, for linear as well as strongly sublinear space per machine---for all graphs with arboricity $\arb=\poly (\log n)$. This family of \emph{uniformly sparse} graphs, also known as \emph{sparse everywhere} graphs, includes but is not restricted to graphs with maximum degree $\poly (\log n)$, minor-closed graphs (e.g., planar graphs and graphs with bounded treewidth), and preferential attachment graphs, and thus arguably contains most graphs of practical relevance \cite{goel2006bounded,OnakFullyDynamicMIC}.
The previously known $\poly(\log \log n)$-round \mpc algorithms either only worked in the special case of $\poly (\log n)$-degree graphs \cite{GU18}, or required the local memory to be strongly superlinear \cite{lattanzi2011filtering}. 

To the best of our knowledge, all $\poly(\log \log n)$-round \mpc matching approximation algorithms for general graphs (even if we allow linear memory) \cite{czumaj2017round,assadi2017coresets,MPCMIS} heavily make use of subsampling,
 inevitably leading to a loss of information. It is thus unlikely that these techniques are applicable for maximal matching, at least not without a factor $\Omega(\log n)$ overhead. In fact, the problem of finding a maximal matching seems to be more difficult than finding a $(1+\eps)$-approximate maximum matching. Indeed, currently, an $O(1)$-approximation can be found almost exponentially faster than a maximal matching, and the approximation ratio can be easily improved from any constant to $1+\eps$ using a reduction of McGregor \cite{mcgregor2005finding}. 

Our result becomes particularly instructive when viewed in this context. It is not only the first maximal matching algorithm that breaks the linear-memory barrier for a large range of graphs, but it also enriches the bleak pool of techniques for maximal matching in the presence of low memory by one particularly simple technique. 

\begin{theorem}\label{thmMIS}
There is an $O\left(\sqrt{\log \arb}+\log\log n \cdot \log \log \Delta\right)$-round low-memory \mpc algorithm that w.h.p.\ computes a maximal independent set in a graph with arboricity $\arb$. 
\end{theorem}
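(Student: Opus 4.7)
The plan is to derive \cref{thmMIS} by composing the degree reduction of \cref{thmDegRed} with the state-of-the-art low-memory \mpc MIS algorithm of Ghaffari and Uitto~\cite{GU18}, used as a black box.

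I would first handle the main case $\arb = o(\poly(n))$, where \cref{thmDegRed} applies. The idea is to invoke the degree reduction on the input graph $G$; this takes $O\bigl(\log\log_\Delta n \cdot \log\log_\arb \Delta\bigr)$ rounds and yields a partial independent set $I_0$ together with a residual subgraph $G'$ of maximum degree $\Delta' = O\bigl(\max\{\arb^{20}, \log^{20} n\}\bigr)$, with the guarantee that extending $I_0$ by any MIS of $G'$ produces a MIS of $G$. On $G'$, I would then run the Ghaffari--Uitto MIS algorithm, which finishes in $O\bigl(\sqrt{\log \Delta'} + \sqrt{\log\log n}\bigr)$ rounds and returns an independent set $I_1$. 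The output is $I_0 \cup I_1$.

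The remaining step is a routine bookkeeping of the round complexity. From $\Delta' = O\bigl(\max\{\arb^{20}, \log^{20} n\}\bigr)$ I get
\[
\sqrt{\log \Delta'} \;=\; O\bigl(\sqrt{\log \arb} + \sqrt{\log\log n}\bigr),
\]
and the degree reduction cost satisfies
\[
\log\log_\Delta n \,\cdot\, \log\log_\arb \Delta \;\le\; \log\log n \,\cdot\, \log\log \Delta.
\]
Adding the two contributions and absorbing the $\sqrt{\log\log n}$ term into $\log\log n \cdot \log\log \Delta$ yields the claimed $O\bigl(\sqrt{\log \arb} + \log\log n \cdot \log\log \Delta\bigr)$ bound.

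For the excluded case $\arb = \poly(n)$, one has $\sqrt{\log \arb} = \Omega(\sqrt{\log n})$, which already dominates both $\sqrt{\log \Delta}$ and $\sqrt{\log\log n}$, so applying the Ghaffari--Uitto algorithm directly to $G$ meets the target bound with no reduction needed. Because all the technical weight is carried by \cref{thmDegRed} and by the black-box MIS algorithm, there is no genuine obstacle in the present argument; the only things I need to be careful about are invoking the correctness guarantee of the reduction (namely, that completing $I_0$ by a MIS of $G'$ gives a MIS of $G$) and checking that the two round-complexity contributions telescope into the stated bound.
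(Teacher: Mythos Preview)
Your proposal is correct and follows essentially the same approach as the paper: split into the case $\arb = \poly(n)$ (apply Ghaffari--Uitto directly) and the case $\arb = o(\poly(n))$ (apply the degree reduction, then Ghaffari--Uitto on the remainder graph), and add up the round complexities. The only cosmetic difference is that the paper invokes \cref{degred} rather than \cref{thmDegRed}, but since the former directly implies the latter with the same $O(\log\log n \cdot \log\log \Delta)$ bound, the arguments are equivalent.
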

This algorithm improves over the $O\left(\log\arb+\sqrt {\log n}\right)$-round algorithm that is obtained by simulating the \local algorithm of \cite{barenboim2012locality,Ghaffari-MIS} and over the $O\left(\sqrt{\log \Delta} + \sqrt{\log \log n}\right)$-round algorithm in the concurent work by Ghaffari and Uitto \cite{GU18}. 
Moreover, for graphs with arboricity $\lambda=\poly (\log n)$, our algorithm is the first $\poly (\log \log n)$-round low-memory \mpc algorithm. The previously known $\poly(\log \log n)$-round \mpc algorithms for MIS either only worked in the special case of $\poly (\log n)$-degree graphs \cite{GU18} and trees \cite{brandt2018breaking}, or required the local memory to be $\widetilde{\Omega}(n)$ \cite{MPCMIS}.

\vspace{0.5cm}
As a maximal matching automatically provides $2$-approximations for maximum matching and minimum vertex cover, \Cref{thmMM} directly implies the following result.

\begin{corollary}\label{corMandVC}
There is an $O\left(\sqrt{\log \arb}+\log\log n \cdot \log \log \Delta\right)$-round low-memory \mpc algorithm that w.h.p.\ computes a $2$-approximate maximum matching and a $2$-approximate minimum vertex cover in a graph with arboricity $\arb$. 
\end{corollary}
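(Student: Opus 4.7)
The plan is to invoke \Cref{thmMM} as a black box to obtain a maximal matching $M$ in the stated round complexity, and then apply two classical combinatorial reductions from maximal matching to approximate maximum matching and approximate minimum vertex cover. Since the corollary is stated as a direct consequence of \Cref{thmMM}, the main work is to verify that both reductions can be implemented in the low-memory \mpc model without blowing up the round complexity.

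For the $2$-approximate maximum matching, the set $M$ itself will serve as our output. To see that $|M| \geq |M^*|/2$, where $M^*$ is any maximum matching, observe that each edge $e \in M^*$ must share an endpoint with some edge of $M$ (otherwise $e$ could be added to $M$, contradicting maximality), while each edge of $M$ can be ``blamed'' in this way by at most two edges of $M^*$, one per endpoint. Hence $|M^*| \leq 2|M|$. For the $2$-approximate minimum vertex cover, the plan is to output $C := V(M)$, the set of endpoints of the edges in $M$. This set is a vertex cover, because an uncovered edge would have both endpoints unmatched, contradicting the maximality of $M$. Moreover, any vertex cover $C^*$ must contain at least one endpoint of each edge in $M$, and since $M$ is a matching these endpoints are distinct, so $|C^*| \geq |M| = |C|/2$.

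The remaining step is to confirm that the reductions do not inflate the round complexity beyond the bound promised by \Cref{thmMM}. Returning the matching $M$ itself is free. Constructing $C = V(M)$ from the distributed representation of $M$ takes $O(1)$ low-memory \mpc rounds using standard primitives: each machine emits the two endpoints of every edge of $M$ it stores, and a sorting-based deduplication removes duplicate vertex identifiers. Combined with the $O\left(\sqrt{\log \arb}+\log\log n \cdot \log \log \Delta\right)$-round complexity of \Cref{thmMM}, this gives the claimed bound for both outputs, and the high-probability guarantee is inherited directly from that of \Cref{thmMM}.

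I do not anticipate any genuine obstacle in this argument: the real content is entirely contained in \Cref{thmMM}, and everything else is a textbook reduction whose \mpc implementation is immediate. The only thing worth double-checking is that the $O(1)$-round vertex-extraction step respects the $n^\delta$ per-machine memory bound, which it clearly does since $|C| \leq 2|M| \leq n$ and the total memory of the system is $\widetilde{\Theta}(m)$.
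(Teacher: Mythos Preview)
Your proposal is correct and matches the paper's approach exactly: the paper does not give a separate proof but simply states that the corollary follows directly from \Cref{thmMM} via the classical fact that a maximal matching is a $2$-approximation for both maximum matching and minimum vertex cover. Your write-up just spells out these textbook reductions and the (trivial) \mpc implementation overhead, which is entirely appropriate.
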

This is the first constant-approximation \mpc algorithm for matching and vertex cover---except for the \local and \pram simulations by \cite{barenboim2012locality} and \cite{luby1985simple,alon1986fast}, respectively, as well as the concurrent work in \cite{GU18}---that work with low memory. All the other algorithms require the space per machine to be either $\widetilde{\Omega}(n)$ \cite{lattanzi2011filtering,czumaj2017round,assadi2017coresets,MPCMIS} or even strongly superlinear \cite{lattanzi2011filtering,AssadiK17}. \Cref{corMandVC} generalizes the range of graphs that admit an efficient constant-approximation for matching and vertex cover in the low-memory \mpc model from graphs with maximum degree $\poly(\log n)$ \cite{GU18} to uniformly sparse graphs with arboricity $\lambda=\poly(\log n)$. 
\vspace{0.2cm}

McGregor's reduction \cite{mcgregor2005finding} allows us to further improve the approximation to $1+\eps$. 
\begin{corollary}\label{cor1+eps}
There is an $O\left(\left(\frac{1}{\eps}\right)^{O(1/\eps)}\cdot \left( \sqrt{\log \arb}+\log\log n \cdot \log \log \Delta\right)\right)$-round low-memory \mpc algorithm that w.h.p.\ computes a $(1+\eps)$-approximate maximum matching, for any $\eps>0$.
\end{corollary}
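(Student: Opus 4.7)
The plan is to combine \Cref{thmMM} as a black box with McGregor's iterative augmentation scheme~\cite{mcgregor2005finding}, which converts any maximal matching (or $O(1)$-approximate matching) routine into a $(1+\eps)$-approximate maximum matching routine at a multiplicative overhead of $(1/\eps)^{O(1/\eps)}$ rounds. Concretely, I would start by computing an initial maximal matching $M_0$ using \Cref{thmMM}, which is already a $2$-approximation, and then perform $k = O(1/\eps)$ ``augmentation phases''. In each phase, by a Hopcroft--Karp style argument, it suffices to find a large collection of vertex-disjoint augmenting paths of length at most $2k+1$ with respect to the current matching, and then simultaneously XOR them into $M$; after $(1/\eps)^{O(1/\eps)}$ such phases, the resulting matching is guaranteed to be a $(1+\eps)$-approximation of the maximum matching.

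To find these short augmenting paths, McGregor's procedure randomly assigns each vertex to one of $O(k)$ layers, keeps only those edges whose endpoints lie in consecutive layers compatible with the matching status, and on each resulting layered bipartite subgraph invokes the black-box maximal matching subroutine. Each layered subgraph is a subgraph of the input graph, so its arboricity is at most $\lambda$ and its maximum degree is at most $\Delta$; thus \Cref{thmMM} applies to each invocation in $O\paren{\sqrt{\log\lambda} + \log\log n\cdot \log\log\Delta}$ rounds w.h.p. Each phase requires $(1/\eps)^{O(1/\eps)}$ such invocations, and the total number of invocations over all phases is $(1/\eps)^{O(1/\eps)}$. The bookkeeping between invocations---updating the current matching, distributing layer labels, and unioning vertex-disjoint augmenting paths---only requires each edge to know the matching status and layer label of its two endpoints, and can be realized in $O(1)$ \mpc rounds using standard sorting and aggregation primitives on $\widetilde{O}(m)$ total memory.

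The main obstacle is verifying that McGregor's reduction, which was originally phrased for streaming/sequential models, can be implemented with strongly sublinear memory per machine without inflating either the arboricity or the maximum degree of the subgraphs on which the black-box subroutine is invoked. Since the subgraphs used are always edge subsets of the input graph, both parameters are preserved, so the per-invocation cost remains that of \Cref{thmMM}. Multiplying the $(1/\eps)^{O(1/\eps)}$ black-box invocations by the per-invocation complexity yields the claimed $O\paren{(1/\eps)^{O(1/\eps)}\cdot\paren{\sqrt{\log\lambda} + \log\log n\cdot\log\log\Delta}}$ round bound, and a union bound over the polynomially many calls preserves the high-probability guarantee.
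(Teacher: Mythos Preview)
Your proposal is correct and follows exactly the approach the paper takes: the paper does not give a detailed proof of this corollary but simply states, immediately before it, that ``McGregor's reduction~\cite{mcgregor2005finding} allows us to further improve the approximation to $1+\eps$,'' treating the result as a direct consequence of \Cref{thmMM} together with that reduction. Your sketch fleshes out precisely this route---invoking the maximal matching algorithm of \Cref{thmMM} as the black box inside McGregor's augmenting-path scheme, noting that the layered subgraphs are edge-subgraphs of $G$ so that $\lambda$ and $\Delta$ are preserved, and tallying the $(1/\eps)^{O(1/\eps)}$ invocations---so you have actually supplied more detail than the paper itself does.
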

\vspace{0.2cm}

Due to a reduction by to Lotker, Patt-Shamir, and Ros\'en \cite{lotkerMatching}, our constant-approximate matching algorithm can be employed to find a $(2+\eps)$-approximate maximum weighted matching. 
\begin{corollary}\label{corW2+eps}
There is an $O\left(\frac{1}{\eps}\cdot \left(\sqrt{\log \arb}+\log\log n \cdot \log \log \Delta\right)\right)$-round low-memory \mpc algorithm that w.h.p.\ computes a $(2+\eps)$-approximate maximum weighted matching, for any $\eps>0$.
\end{corollary}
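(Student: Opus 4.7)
The plan is to invoke the reduction of Lotker, Patt--Shamir, and Ros\'en~\cite{lotkerMatching} as a black box, using the constant-approximate maximum matching algorithm from \Cref{corMandVC} as its subroutine. Recall that this reduction transforms any algorithm computing a constant-approximate (unweighted) maximum matching into one computing a $(2+\eps)$-approximate maximum weighted matching with only an $O(1/\eps)$ multiplicative overhead in the round complexity. Plugging the bound of \Cref{corMandVC} directly into this overhead yields the claimed running time.

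At a high level, the reduction groups edges into geometrically spaced weight classes (e.g., edges with weight in $[(1+\eps)^i, (1+\eps)^{i+1})$) and processes them from heaviest to lightest, running the unweighted matching subroutine on each class restricted to the vertices that have not yet been matched. A careful charging argument shows that $O(1/\eps)$ such iterations suffice for a $(2+\eps)$-approximation: the contribution of edges in much lighter classes is dominated by $\eps$ times the weight already captured by heavier classes, and a constant-factor loss per class is absorbed into the $2+\eps$ factor. Because we only invoke the subroutine $O(1/\eps)$ times, the overall round complexity is $O(1/\eps)$ times that of \Cref{thmMM}.

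What remains is to verify that the reduction implements efficiently in the low-memory \mpc model. The key observation is that taking a subgraph (restricting to a weight class and to currently unmatched vertices) never increases the arboricity $\arb$ or the maximum degree $\Delta$, so each subroutine call still respects the bound of \Cref{thmMM}. Constructing each subgraph amounts to filtering edges by weight and by the matched/unmatched status of their endpoints, which can be done in $O(1)$ low-memory \mpc rounds using standard sorting and aggregation primitives~\cite{goodrich2011sorting}; the resulting edges and vertex-status information fit within the overall memory budget $\widetilde{\Theta}(m)$.

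The main conceptual obstacle is confirming that the Lotker et al.\ reduction, originally stated for a distributed model, transfers cleanly to low-memory \mpc without additional overhead beyond the $O(1/\eps)$ factor. This is indeed the case because the reduction is oblivious to the specific model of computation: it uses the unweighted matching algorithm purely as a black box on subgraphs, and the only model-specific step is subgraph construction, which, as noted, is cheap. Putting these pieces together gives the stated corollary.
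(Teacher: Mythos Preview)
Your proposal is correct and takes the same approach as the paper: invoke the Lotker--Patt-Shamir--Ros\'en reduction as a black box on top of the constant-approximate matching from \Cref{corMandVC}. The paper itself provides no proof for this corollary beyond the one-line citation of the reduction in the surrounding text, so your elaboration (on subgraph construction in low-memory \mpc and on the monotonicity of $\arb$ and $\Delta$ under taking subgraphs) is already more detailed than what appears in the paper.
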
 
\vspace{0.2cm}

\subsection*{Concurrent Work}
In this section, we briefly discuss an independent and concurrent work by Behnezhad, Derakhshan, Hajiaghayi, and Karp \cite{concurr}. The authors there arrive at the same results, with the same round complexities and the same memory requirements. As in our work, their key ingredient is a degree reduction technique that reduces the maximum degree of a graph from $\Delta$ to $\poly (\lambda, \log n)$ in $O(\log \log \Delta \cdot \log \log n)$ rounds. While our degree reduction algorithm is based on (a variant of) the $H$-partition that partitions the vertices according to their degrees, the authors in \cite{concurr} show how to implement (and speed up) the \local degree reduction algorithm by Barenboim, Elkin, Pettie, and Schneider \cite[Theorem 7.2]{barenboim2012locality} in the low-memory \mpc model. Note that both algorithms do not require knowledge of the arboricity $\lambda$, as further discussed in \Cref{remarkKnowledgeArb}.
\newpage 
\section{Algorithm Outline and Roadmap}

In the low-memory setting, one is inevitably confronted with 
the challenge of locality: 
as the space of a machine is strongly sublinear, it will never be able to see a significant fraction of the nodes, regardless of how sparse the graph is. 
Further building on the ideas by \cite{brandt2018breaking}, we cope with this imposed locality by adopting local techniques---mainly inspired by the \local model \cite{linial1992locality}---and enhancing them with the additional power of global communication, in order to achieve an improvement in the round complexity compared to the \local algorithms while still being able to guarantee applicability 
in the presence of strongly sublinear memory.

The main observation behind our algorithms is the following: If the maximum degree in the graph is small, \local algorithms can be simulated efficiently in the low-memory \mpc model. 
Our method thus basically boils down to reducing the maximum degree of the input graph, as described in \Cref{thmDegRed}, and 
correspondingly consists of two parts: a \emph{degree reduction} part followed by a \emph{\local simulation} part.
In the degree reduction part, which constitutes the key ingredient of our algorithm, we want to find a partial solution (that is, either a matching or an independent set) so that the \emph{remainder graph}---i.e., the graph after the removal of this partial solution (that is, after removing all matched nodes or after removing the independent set nodes along with all their neighbors)---has smaller degree. 

\begin{lemma}\label{degred}
There are $O\left(\log \log n \cdot \log \log \Delta\right)$-round low-memory \mpc algorithms that compute a matching and an independent set in a graph with arboricity $\arb=o(\poly (n))$ so that the remainder graph w.h.p.\ has maximum degree $O\left(\left(\max\{\lambda, \log n\}\right)^{20}\right)$.
\end{lemma}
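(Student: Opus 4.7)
Following the paper's outline, the plan has two parts: first build a fast $H$-partition of $V$ into $L = O(\log\log n)$ layers, then compute the partial matching/IS by processing these layers with a bounded-degree low-memory \mpc subroutine per layer. The partition will satisfy that every $v \in V_i$ has at most $T := O\paren{(\max\{\arb,\log n\})^{10}}$ neighbors in $V_{\geq i} := V_i \cup V_{i+1} \cup \cdots \cup V_L$. The classical Barenboim--Elkin--Pettie--Schneider construction---iteratively peel vertices of current degree at most $2\arb$---yields $O(\log n)$ layers, since arboricity guarantees that in any induced subgraph at least half the vertices have degree at most $4\arb$. To compress this down to $O(\log\log n)$ layers we use the much larger threshold $T \gg \arb,\log n$ together with randomization: in each phase, each surviving vertex whose alive-degree in the current residual is at most $T$ joins the next layer with a suitable probability. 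Combining the arboricity-based counting bound (on any alive set $S$, at most $2\arb|S|/T$ vertices can have alive-degree exceeding $T$) with Chernoff concentration on the random marking shows that each phase shrinks the alive set by a very large factor, so that $O(\log\log n)$ phases suffice. Each phase can be implemented in $O(1)$ \mpc rounds via sort-based degree aggregation.

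\textbf{Layered matching/MIS.} We then process the layers in order. When $V_i$ is current, all vertices in the already-processed layers are settled: either in the partial matching/IS, eliminated by a neighbor in it, or left alive with their final degree bounded. By the $H$-partition, the induced subgraph on the alive vertices of $V_i$ has maximum degree at most $T$, so we can invoke the black-box low-memory \mpc matching/MIS algorithm of \cite{GU18} on it, which runs in $O\paren{\sqrt{\log T} + \log\log\log n} = O(\log\log\Delta)$ rounds and returns a maximal matching/IS on that bounded-degree subgraph. We merge the result into our global partial solution. After all $L$ layers are processed, the MIS version settles every vertex via the per-layer maximality (so the remainder is empty and trivially satisfies the degree bound), and the matching version leaves every unmatched vertex with at most $T$ unmatched neighbors in its own layer and in upper layers; a short additional clean-up pass caps the unmatched-downward-neighbor count at $O(T)$ as well, giving a remainder graph of max degree $O\paren{(\max\{\arb,\log n\})^{20}}$. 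The total round complexity is $O(\log\log n) + O(L \cdot \log\log\Delta) = O(\log\log n \cdot \log\log\Delta)$, and a union bound over the randomized invocations preserves the w.h.p.\ guarantee.

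\textbf{Main obstacle.} The crux of the proof is the $O(\log\log n)$-layer $H$-partition: the naive arboricity-based peeling gives only a constant-factor shrinking per round and hence $O(\log n)$ layers. The main technical step is the combined counting-plus-concentration argument, which must show that with the coarse threshold $T \gg \arb,\log n$ each randomized peeling phase strips away all but a tiny fraction of the alive set, so that $O(\log\log n)$ phases suffice. A secondary subtlety is the degree bound for the matching version, which does not follow from the $H$-partition and the per-layer maximal matching guarantee alone and requires some additional work to also control the ``downward'' neighbors of surviving vertices.
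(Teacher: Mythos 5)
The centerpiece of your plan is a partition of $V$ into $L = O(\log\log n)$ layers such that every vertex has at most $T = \poly(\max\{\lambda,\log n\})$ neighbors in its own and higher layers. You propose to get this via randomized peeling plus Chernoff. This step fails, and it fails for a reason that no amount of concentration can repair: the arboricity bound you yourself cite (``on any alive set $S$, at most $2\lambda|S|/T$ vertices can have alive-degree exceeding $T$'') is essentially tight for the \emph{fraction} that survives a single peeling round. If you peel \emph{all} vertices of current residual degree at most $T$ (the strongest possible deterministic peel), the alive set shrinks by a factor of at most $T/(2\lambda)$ per round, and the number of rounds needed to exhaust $n$ vertices is $\Theta(\log_{T/\lambda} n)$. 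For $\lambda = O(1)$ and $T = \poly(\log n)$ this is $\Theta(\log n/\log\log n)$, not $O(\log\log n)$: a balanced $T$-ary tree of depth $h = \log_T n$ is a concrete witness, since each peeling round removes exactly one level. Randomizing which low-degree vertices move to the next layer can only peel a \emph{subset} of what the deterministic rule peels, so Chernoff gives no speedup here. With $L$ no longer $O(\log\log n)$, your round bound $O(L\cdot\log\log\Delta)$ collapses.

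\textbf{What the paper actually does.} The paper does not compress the number of \emph{layers}; the $H$-partition it uses has the usual $O(\log_\Delta n)$ layers (and $\Delta$ can be tiny, so this can be close to $\log n$). The $O(\log\log n)$ round complexity per pass comes instead from \emph{graph exponentiation}: because each peeled layer shrinks the alive set by a factor $\Theta(\Delta^{1/20})$, the memory budget per surviving node grows geometrically, which allows each node to learn a $2^i$-hop neighborhood at iteration $i$ and thereby peel $2^i$ layers at once. Hence $O(\log\log n)$ iterations suffice both to build the $H$-partition and to run the top-down selection phase. A single such pass only reduces $\Delta$ to $\Delta^{0.4}$ (\Cref{poldegred}); the proof of \Cref{degred} then iterates this $O(\log\log\Delta)$ times until $\Delta = \poly(\max\{\lambda,\log n\})$. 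So in the paper the $\log\log\Delta$ factor counts degree-reduction passes, not layers, and the $\log\log n$ factor counts exponentiation steps, not phases of a shallow partition.

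\textbf{A secondary gap.} Even setting aside the partition, your matching argument leaves the downward degree of an unmatched vertex uncontrolled; you defer this to ``a short additional clean-up pass,'' which is exactly the nontrivial part. The paper handles it in the selection phase by processing layers from highest to lowest so that a high-in-degree vertex is w.h.p.\ removed by one of its many children (the proof of \Cref{lemmaHeavyDisappear}); this needs a probabilistic argument, not a post-hoc cap.
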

Note that this directly implies \Cref{thmDegRed}. Next, we show how \Cref{thmMIS,thmMM} follow from \Cref{degred} as well as from an efficient simulation of \local algorithms due to \cite{GU18}.
\begin{proof}[Proof of \Cref{thmMIS,thmMM}]
If $\lambda$, and hence $\Delta$, is at least polynomial in $n$, we directly apply the algorithm by \cite{GU18}, which runs in $O(\sqrt{\log \Delta}+\sqrt{\log \log n})=O(\sqrt{\log n})$ rounds. 
Otherwise, we first apply the algorithm of \Cref{degred} to obtain a partial solution that reduces the degree in the remainder graph to $\Delta'=O(\arb^{20})$ if $\arb\geq\log n$, or to $\Delta'=O(\log^{20} n)$ if $\arb\leq\log n$. It runs in $O(\log \log n \cdot \log \log \Delta)$ rounds. We then apply the algorithm by \cite{GU18} on the remainder graph. This takes $O(\sqrt{\log \Delta'}+\sqrt{\log\log n})=O(\sqrt{\log \lambda}+\sqrt{\log\log n})$ rounds. 
\end{proof} 

\begin{remark}\label{remarkKnowledgeArb}
While our algorithms, at first sight, seem to need to know $\lambda$, we can employ the standard technique \cite{knowledge} of running the algorithm with doubly-exponentially increasing estimates for $\lambda$.
\end{remark}

Our degree reduction algorithm in \Cref{degred} consists of several phases, each reducing the maximum degree by a polynomial factor, as long as the degree is still large enough.
\begin{lemma}\label{poldegred}
There are $O\left( \log \log n\right)$-round low-memory \mpc algorithms that compute a matching and an independent set, respectively, in a graph with arboricity $\arb=o(\poly(n))$ and maximum degree $\Delta=\Omega\left(\left(\max\{\lambda, \log n\}\right)^{20}\right)$ so that the remainder graph w.h.p.\ has maximum degree $O(\Delta^{0.4})$.\end{lemma}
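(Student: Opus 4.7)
The plan is to use a variant of the $H$-partition of Barenboim--Elkin derived from the Nash-Williams arboricity decomposition. The key observation is that in a graph with arboricity $\lambda$, at most $2\lambda n/H$ vertices have degree exceeding any threshold $H$. Setting $H=\Theta(\Delta^{0.4})$, the hypothesis $\Delta\geq\lambda^{20}$ gives $|V_{\mathrm{high}}|=O(n/\Delta^{0.35})$, where $V_{\mathrm{high}}$ denotes the set of vertices of original degree greater than $H$. Therefore, to guarantee maximum degree $O(\Delta^{0.4})$ in the remainder graph it suffices to find a matching that saturates $V_{\mathrm{high}}$ (respectively, an independent set that dominates it): every surviving vertex then has original degree at most $H$, so at most $H$ surviving neighbors.

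I would build an $H$-partition $V_1,\ldots,V_L$ of $V$ by iterated peeling, where $V_i$ consists of the vertices that, after removing $V_1\cup\cdots\cup V_{i-1}$, have degree at most $H$ in the residual graph. By Markov's inequality combined with the arboricity bound, each peeling step eliminates at least a constant fraction of the still-active vertices, so the vanilla partition has $L=O(\log n)$ layers. To compress this to $O(\log\log n)$ layers in MPC, I would use geometrically growing thresholds $H_i$ across the peeling steps, which together with the arboricity bound shrink the residual graph doubly exponentially. A single peeling step can be implemented in $O(1)$ MPC rounds by aggregating current degree counts via standard sort-and-scan primitives within $O(n^{\delta})$ local memory.

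With the partition in hand, the algorithm then processes the layers in reverse order of peeling, starting with $V_L$. When processing $V_i$, every vertex in $V_i$ has at most $H_i$ neighbors in $V_i\cup V_{i+1}\cup\cdots\cup V_L$ by construction; since the higher-indexed layers have already been handled, this bounds its active degree. A single MPC round of randomized greedy assignment---each unhandled high-degree vertex in $V_i$ proposes a uniformly random active neighbor, with random priorities breaking conflicts---saturates (respectively, dominates) each $v\in V_{\mathrm{high}}\cap V_i$ w.h.p., because $v$ still has $\Omega(H)$ candidate partners. Since $V_{\mathrm{high}}\subseteq V_2\cup\cdots\cup V_L$, after $O(\log\log n)$ such MPC rounds every high-degree vertex has been covered, and the surviving subgraph inherits the desired degree bound.

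The main obstacle I anticipate is proving that the geometrically thickened layers retain the structural property needed at the processing step, namely that when $V_i$ is processed each of its vertices still has only $O(H_i)$ neighbors in the unprocessed portion. This calls for a careful probabilistic argument combining the arboricity bound with concentration for the randomness driving both the layer assignment and the proposal phase, and a union bound over the $O(\log\log n)$ layers and the $n$ vertices; the generous constants $20$ and $0.4$ baked into the hypothesis are exactly the slack required to close this union bound comfortably, and should also absorb the $O(n^{\delta})$-memory restriction when each machine collects only its assigned edges for the per-layer proposal round.
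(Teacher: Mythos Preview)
Your proposal contains a genuine gap: the claim that geometrically growing peeling thresholds compress the $H$-partition to $O(\log\log n)$ layers is false under the lemma's hypotheses. Regardless of how you choose the thresholds $H_i\le\Delta$, a single peeling step can shrink the residual vertex set by at most a factor $\Delta/(2\lambda)$ (by the arboricity bound you cite), so the number of layers is always $\Omega(\log_{\Delta/\lambda} n)$. When $\Delta=\Theta(\log^{20} n)$ and $\lambda=\Theta(\log n)$---which the lemma allows---this is $\Theta(\log n/\log\log n)$, not $O(\log\log n)$. Your asserted ``doubly exponential'' shrinkage would require $H_i\ge 2\lambda\cdot n_i^{\,c}$ for some constant $c>0$, but such thresholds exceed $\Delta$ already at the first step whenever $\Delta$ is subpolynomial in $n$. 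Consequently, processing the layers one by one cannot meet the $O(\log\log n)$ round bound.

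The paper takes the opposite route: it keeps a \emph{fixed} threshold $d=\Delta^{1/10}$, accepts that the resulting $H$-partition has $\ell=\Theta(\log_{\Delta} n)$ layers, and achieves the round bound by \emph{graph exponentiation}. In iteration $i$ each remaining node gathers its $2^i$-hop neighborhood and locally simulates $2^i$ peeling steps, so all $\ell$ layers are discovered in $O(\log\ell)=O(\log\log n)$ iterations; the memory budget is maintained because the number of surviving nodes drops by a $\Delta^{\Theta(2^i)}$ factor before the $2^i$-hop balls are collected. The selection phase is handled the same way, in chunks of $2^i$ layers rather than layer by layer. Your write-up is missing precisely this exponentiation idea, which is the crux of the $O(\log\log n)$ bound. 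A secondary issue is that your one-shot ``random proposal with priority tie-breaking'' is not shown to saturate every high-degree vertex with high probability; the paper instead uses a directed mark--propose--select protocol (children mark an outgoing edge, parents propose one marked incoming edge, conflicts are then resolved top-down through the layers), for which the required concentration is straightforward.
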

We first show that indeed iterated applications of this polynomial degree reduction lead to the desired degree reduction in \Cref{degred}. 
\begin{proof}[Proof of \Cref{degred}]
We iteratively apply the polynomial degree reduction from \Cref{poldegred}, observing that as long as the maximum degree is still in $\Omega(\lambda^{20})$ and $\Omega(\log^{20}n)$, we reduce the maximum degree by a polynomial factor from $\Delta$ to $O(\Delta^{0.4})$ in each phase, resulting in at most $O(\log \log \Delta)$ phases. 
\end{proof}
It remains to show that such a polynomial degree reduction, as claimed in \Cref{poldegred}, indeed is possible. This is done in two parts. First, in \Cref{sec:seqDegRed}, we provide a centralized algorithm for a polynomial degree reduction, and then, in \Cref{sec:exp}, we show how to implement this centralized algorithm efficiently in the low-memory \mpc model.

\section{A Centralized Degree Reduction Algorithm}\label{sec:seqDegRed}
In this section, we present a centralized algorithm for the polynomial degree reduction as stated in \Cref{poldegred}. For details on how this algorithm can be implemented in the low-memory \mpc model, we refer to \Cref{sec:exp}. 
In \Cref{algoDescription}, we give a formal description of the (centralized) algorithm. Then, in \Cref{correctness}, we prove that this algorithm indeed leads to a polynomial degree reduction. 

\subsection{Algorithm Description}\label{algoDescription}

In the following, we set $\param=\Delta^{1/10}$, and observe that $\param=\Omega(\arb^2)$ as well as $\param=\Omega(\log^{2} n)$, due to the assumptions on $\Delta$ in the lemma statement.
We present an algorithm that reduces the maximum degree to $O(\param^4)$.
This algorithm consists of three phases: a \emph{partition} phase, in which the vertices are partitioned into layers so that every node has at most $d$ neighbors in higher-index layers, 
 a \emph{mark-and-propose} phase in which a random set of candidates is proposed independently in every layer, and a \emph{selection} phase in which a valid subset of the candidate set is selected as partial solution by resolving potential conflicts across layers. 

\paragraph{Partition Phase}
We compute an $H$-partition, that is, a partition of the vertices into layers so that every vertex has at most $d$ neighbors in layers with higher (or equal) index \cite{nash1961edge,nash1964decomposition,barenboim2010sublogarithmic}. 

\begin{definition}[$H$-Partition]
An \emph{$H$-partition} with out-degree $\param$, defined for any $\param>2\arb$, is a partition of the vertices into $\ell$ \emph{layers} $L_1, \dotsc, L_{\ell}$ with the property that a vertex $v\in L_i$ has at most $\param$ neighbors in $\bigcup_{j= i}^{\ell} L_j$. We call $i$ the \emph{layer index} of $v$ if $v \in L_i$. For neighbors $u\in L_i$ and $v\in L_j$ for $i<j$, we call $v$ a \emph{parent} of $u$ and $u$ a \emph{child} of $v$. If we think of the edges as being directed from children to parents, this gives rise to a partial orientation of the edges, with no orientation of the edges connecting vertices in the same layer.
\end{definition}
Note that, for $\param> 2\arb$, such a partition can be computed easily by the following sequential greedy algorithm, also known as \emph{peeling} algorithm: Iteratively, for $i\geq 1$, put all remaining nodes with remaining degree at most $d$ into layer $i$, and remove them from the graph. 

\paragraph{Mark-and-Propose Phase} We first mark a random set of candidates (either edges or vertices) and then propose a subset of these marked candidates for the partial solution as follows.

In the case of maximal matching, every node first marks an outgoing edge chosen uniformly at random and then proposes one of its incoming marked edges, if any, uniformly at random.

In the case of maximal independent set, every node marks itself independently with probability $p=d^{-2}$. Then, if a node is marked and none of its neighbors in the same layer is marked, this node is proposed. 
Note that whether a marked node gets proposed only depends on nodes in the same layer, thus on neighbors with respect to unoriented edges. 

\paragraph{Selection Phase} The set of proposed candidates might not be a valid solution, meaning that it might have some conflicts (i.e., two incident edges or two neighboring vertices). In the selection phase, possible conflicts are resolved (deterministically) by picking an appropriate subset of the proposed candidates, as follows. Iteratively, for $i=\ell, \dotsc, 1$, all (remaining) proposed candidates in layer $i$ are added to the partial solution and then removed from the graph. 

In the case of maximal matching, we add all (remaining) proposed edges directed to a node in layer $i$ to the matching and remove both their endpoints from the graph. 

In the case of maximal independent set, we add all (remaining) proposed nodes in layer $i$ to the independent set and remove them along with their neighbors from the graph. 

\subsection{Proof of Correctness}\label{correctness}

\begin{figure}[!htb]
\caption{Illustration of the mark-and-propose and the selection phase for matching in (a) and independent set in (b). Blue indicates marked but not proposed, green stands for (marked and) proposed but not selected, and red means (marked and proposed and) selected. Note that we omitted all (but a few) irrelevant edges from the figure; the partition into layers thus might not correspond to a valid $H$-partition.}
\centering
\begin{subfigure}{1\textwidth}

		\centering

\includegraphics[scale=1.1]{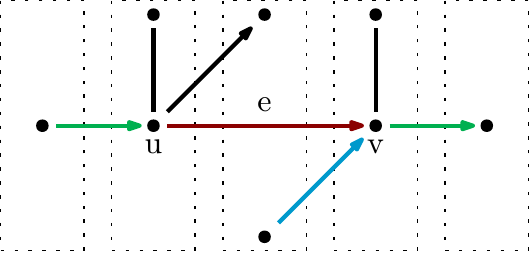}
		\caption{An (oriented) edge $e=(u,v)$ that is selected to be added to the matching cannot have an incident edge that is also selected: an unoriented incident edge cannot be marked as only oriented edges are marked; an oriented edge with the same starting point $u$ cannot be marked as $u$ marks only one outgoing edge; an oriented edge with the same endpoint $v$ cannot be proposed as $v$ proposes only one incoming edge; all other oriented edges $f$ are either processed before (in the case of an outgoing edge from $v$) or after (in the case of an incoming edge to $u$) edge $e$ in the selection phase. In the former case, the selection of $f$ would lead to the removal of $e$ before $e$ is processed; $e$ thus would not be selected. In the latter case, the edge $f$ is removed immediately after $e$ is selected (and thus before $f$ is processed), and thus cannot be selected. 
		}\label{fig:matching}

	\end{subfigure}
	\begin{subfigure}{1\textwidth}

		\centering

	\includegraphics[scale=1.1]{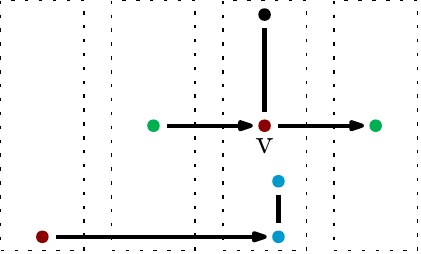}

		\caption{If two neighboring nodes are marked, none of them will be proposed, and consequently, none of them will be selected. A node $v$ that is selected to be added to the independent set cannot have a neighbor that is also selected: a neighbor in the same layer cannot be marked as otherwise $v$ would not be proposed; a neighbor in a lower-index layer is removed from the graph when $v$ joins the independent set, and hence before it potentially could be selected; a selected neighbor in a higher-index layer would lead to $v$'s immediate removal from the graph; when $v$'s layer is processed, $v$ is not part of the graph anymore, and thus could not be selected. 
 }\label{fig:MIS}

	\end{subfigure}

\label{fig:mps}
\end{figure}

It is easy to see that the selected solution is a valid partial solution, that is, that there are no conflicts. It remains to be shown that the degree indeed drops to $O(d^4)$. As the out-degree is bounded by $d$, it is enough to show the following. 

\begin{lemma}\label{lemmaHeavyDisappear}
Every vertex with in-degree at least $d^{4}$ gets removed or all but $d^4$ of its incoming edges get removed, with high probability.
\end{lemma}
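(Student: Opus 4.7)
The plan is to fix a vertex $v$ with in-degree at least $d^4$ and show the claimed event holds with high probability per vertex; a union bound over all $n$ vertices then gives the lemma.

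For the matching variant, I plan to show $v$ is actually removed. Let $C$ be the set of children of $v$, so $|C| \geq d^4$. Each $u \in C$ marks one of its at most $d$ outgoing edges uniformly at random, and hence marks $(u, v)$ with probability at least $1/d$. The marks across distinct children are independent, so by Chernoff applied to $\mathrm{Bin}(d^4, 1/d)$ (mean $d^3 \geq \log^6 n$ since $d \geq \log^2 n$) at least one (indeed $\Omega(d^3)$) of them marks $(u,v)$ with high probability, making $v$ propose some edge $(u^*, v)$. The structural claim is then that $v$ is removed by the end of the selection phase. Let $L_i$ be $v$'s layer. When $L_i$ is processed, $u^*$ is still alive: its unique marked outgoing edge $(u^*, v)$ has head in $L_i$, so no earlier (higher-layer) selection could have removed $u^*$, and any marked incoming edge to $u^*$ is only handled when $u^*$'s own strictly lower layer is processed. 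Hence if $v$ is also alive at $L_i$, then $(u^*, v)$ is selected and $v$ is matched; otherwise $v$ was already matched at an earlier iteration via one of its own outgoing marked edges. In either case, $v$ is removed.

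For the MIS variant I would again aim for $v$ to be removed w.h.p. Writing $N^{\geq}(u)$ for the set of neighbors of $u$ in layers $\geq$ that of $u$, define $A_u := \{u \text{ is marked}\} \cap \{\text{no vertex of } N^{\geq}(u) \text{ is marked}\}$ for each child $u$ of $v$. Then $A_u$ suffices to force $u$ into the IS (and thereby remove $v$): $A_u$ makes $u$ proposed and guarantees that no $w \in N^{\geq}(u)$ is marked, so no such $w$ is proposed or selected, and hence $u$ is alive and proposed when its layer is processed. Each $A_u$ has probability $\geq p(1-p)^d \geq p/2 = d^{-2}/2$, giving $\mu := \sum_u \Pr[A_u] \geq d^2/2$. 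The main obstacle is lower-bounding $\Pr[\bigcup_u A_u]$ to w.h.p.\ level despite strong dependencies: $v \in N^{\geq}(u)$ for every child $u$, so all events share the sub-event ``$v$ unmarked'', and pairs of children may share further higher-layer neighbors. My plan is to condition on ``$v$ unmarked'' (probability $1 - p$), which leaves each event depending on at most $d - 1$ other markings, and then to use the $H$-partition bounds $|N^{\geq}(u)| \leq d$ together with $d \geq \lambda^2$ to bound pairwise overlaps, applying a Janson- or Suen-type concentration inequality (or a carefully weighted second-moment argument) to boost $\Pr[\bigcup_u A_u]$ to $1 - n^{-c}$. Failing that, the fallback is the dichotomy in the statement itself: conditional on ``$v$ survives'', Chernoff on the independent markings of the $\geq d^4$ children of $v$ (mean $\geq d^2$) together with the fact that every marked-and-proposed child is removed (either into the IS or by a higher-layer IS neighbor) lets one argue that the number of surviving children is at most $d^4$.
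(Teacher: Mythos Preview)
Your matching argument is correct and essentially the same as the paper's (the paper bounds the probability that no incoming edge is marked directly by $(1-1/d)^{d^4}\le e^{-d^3}$ rather than via Chernoff, but the structural part---once $v$ proposes an edge, $v$ is matched---is identical).

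For MIS, both your primary plan and your fallback have genuine gaps.

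\emph{Primary plan.} The union $\bigcup_u A_u$ is \emph{not} a w.h.p.\ event in general. Take the configuration where all $\ge d^4$ children of $v$ share a second common parent $w\neq v$ (perfectly consistent with the $H$-partition, since each child has at most $d$ parents). Then every $A_u$ requires $w$ to be unmarked, so $\Pr\bigl[\bigcap_u \overline{A_u}\bigr]\ge \Pr[w\text{ marked}]=p=d^{-2}$, and since $d$ may be only $\poly\log n$, this is far from $n^{-c}$. The lemma still holds in this configuration---when $w$ is marked it is typically selected and removes all the children, so the ``or all but $d^4$ incoming edges get removed'' clause kicks in---but your events $A_u$ are too restrictive to capture that outcome. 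More structurally, even after conditioning on $v$ being unmarked, a single shared parent $w$ can lie in $N^{\ge}(u)$ for \emph{unboundedly} many children $u$, so the family $\{A_u\}$ is not read-$k$ for any useful $k$; standard Janson is inapplicable (the events are not monotone increasing), Suen and bounded-dependence Chernoff need a bounded dependency degree you do not have, and a second-moment computation in the shared-parent example also only yields $\Pr[\bigcap\overline{A_u}]=O(d^{-2})$.

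\emph{Fallback.} Conditioning on ``$v$ survives'' is not innocuous: it forces every child of $v$ to stay out of the independent set, which is strongly correlated with the children's own markings, so the markings are no longer independent Bernoulli$(p)$. Even waiving that, showing that $\Omega(d^2)$ children are marked-and-proposed does not bound the number of \emph{surviving} children by $d^4$: if $v$ started with $2d^4$ children, removing $\Omega(d^2)$ of them still leaves roughly $2d^4$.

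\emph{How the paper handles this.} The paper works with the weaker event ``$u$ is proposed'' (marked, no \emph{same-layer} neighbor marked). The point of restricting to same-layer neighbors is that any node's marking now influences at most $d+1$ such indicators (itself and its $\le d$ same-layer neighbors), so the family is read-$O(d)$ and a bounded-dependence Chernoff gives deviation probability $e^{-\Omega(\mu/d)}=e^{-\Omega(d)}$. To decouple this from the conditioning, the paper first exposes only the randomness in layers $\ge i$ (which determines whether $v$ survives and which children survive that stage) and then analyzes proposing, which depends only on markings in layers $<i$ and is therefore independent of what was exposed. This staged reveal, together with the read-$O(d)$ structure of the same-layer events, is the missing idea.
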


\begin{proof}[Proof of \Cref{lemmaHeavyDisappear} for matching]
Let $v$ be a node with degree at least $d^4$. First, observe that if at least one incoming edge of $v$ is proposed, then an edge incident to $v$ (not necessarily incoming) will be selected to be added to the matching. This is because the only reason why $v$ would not select a proposed incoming edge is that $v$ has already been removed from the graph, and this happens only if its proposed edge has been selected to be added to the matching in a previous step. It thus remains to show that every vertex $v$ with in-degree at least $d^4$ with high probability will have at least one incoming edge that has been proposed by the respective child. As every incoming edge of $v$ is proposed independently with probability at least $1/\param$, the probability of $v$ not having a proposed incoming edge is at most $\left(1-1/\param\right)^{\param^{4}}\leq e^{-\param^{3}}=e^{-\Omega(\log^{6}n)}=o\left(\frac{1}{\poly(n)}\right)$. A union bound over all vertices with degree at least $\param^4$ concludes the proof. 
\end{proof}

\begin{proof}[Proof of \Cref{lemmaHeavyDisappear} for independent set]
Let $v$ be a vertex in layer $i$ that is still in the graph and has at least $d^4$ children after all layers with index $\geq i$ have been processed. We show that then at least one of these children will be selected to join the independent set with high probability. Note that this then concludes the proof, as in all the cases either $v$ will be removed from the graph or will not have high in-degree anymore. 
Moreover, observe that such a child $u$ of $v$ (that is still there after having processed layers $\geq i$) will be selected to join the independent set iff it is proposed. This is because if it did not join even though it is proposed, then a parent of $u$ would had been selected to join the independent set, in which case $u$ would not have been part of the graph anymore, at latest after $i$ has been processed, thus would not count towards $v$'s high degree at that point.
 
Every such child $u$ of $v$ is marked independently with probability $p=d^{-2}$. The probability of $u$ being proposed and hence joining the independent set is at least $p(1-p)^{d}$, as it has at most $d$ neighbors in its layer, and it is proposed iff it is marked and none of its neighbors in the same layer is marked. Vertex $v$ thus in expectation has at least $\mu:=d^{4} p(1-p)^{\param}\geq d^{2}e^{-2/d}=\Omega(\param^{2})$ children that join the independent set. 

Since whether a node $u$  proposes and hence joins the independent set depends on at most $\param$ other nodes (namely $u$'s neighbors in the same layer), it thus follows from a variant of the Chernoff bounds for bounded dependence, e.g., from Theorem 2.1 in \cite{pemmaraju2001equitable}, that the probability of $v$ having, say, $0.5\mu$ neighbors that join the independent set is at most $e^{-\Omega(\mu/\param)}=e^{-\Omega(\param)}\leq e^{-\Omega\left(\log^{2}n\right)}= o\left(\frac{1}{\poly(n)}\right)$. A union bound over all vertices $v$ with degree at least $d^4$ concludes the proof.\end{proof}

\section{Implementation of the Degree Reduction Algorithm in \mpc}\label{sec:exp}
In this section, we show how to simulate the centralized degree reduction algorithm from \cref{sec:seqDegRed} in the low-memory \mpc model.
In \Cref{sec:construction}, we show how to implement the partition phase efficiently in the low-memory \mpc model. Then, in \cref{sec:simul}, we describe how to perform the simulation of the mark-and-propose as well as the selection phase.
Together with the correctness proof established in \Cref{correctness}, this will conclude the proof of \Cref{poldegred}. 

The main idea behind the simulation is to use the well-known \emph{graph exponentiation} technique ~\cite{Lenzen2010brief}, which can be summarized as follows: 
Suppose that every node knows its $2^{i - 1}$-hop neighborhood in iteration $i - 1$.
Then, in iteration $i$, each node can inform the nodes in its $2^{i - 1}$-hop neighborhood of the topology of its $2^{i - 1}$-hop neighborhood.
Hence, every node can learn of its $2^{i}$-hop neighborhood in iteration $i$, allowing it to simulate any $2^{i}$-round \local algorithm in $0$ rounds. Using this exponentiation technique, in principle, every $t$-round \local algorithm can be simulated in $O(\log t)$ \mpc rounds. We have to be careful about the memory restrictions though.
In iteration $i$ of the exponentiation process, we need to store a copy of the $2^{i}$-hop neighborhood of every node.
If the neighborhood contains more than $n^{\delta}$ edges, we violate the local memory constraint.
Similarly, the total memory of $\widetilde \Theta(m)$ might not be enough to store each of these copies, even if every single copy fits to a machine.
In order to deal with these issues, the key observation is that a large fraction of the nodes in the graph are contained in the first layers of the $H$-partition.
In particular, we show that if we focus on the graph remaining after $\ell/2$ iterations of peeling, we can perform roughly $\log \ell$ exponentiation steps without violating the memory constraints.
Hence, we can perform $2^i$ steps of the degree reduction process in roughly $i$ communication rounds.

	In case the maximum degree $\Delta$ is larger than the local memory $S=O(n^{\delta})$, one needs to pay attention to how the graph is distributed.
	One explicit way is to split high-degree nodes into many copies and distribute the copies among many machines.
	For the communication between the copies, one can imagine a (virtual) balanced tree of depth $1 / \delta$ rooted at one of the copies.
	Through this tree, the copies can exchange information in $O(1/\delta)$ communication rounds.
	For the sake of simplicity, our write-up assumes that $\Delta \ll n^{\delta}$. 

\subsection{Partition Phase}\label{sec:construction}

We first prove some properties of the $H$-partition constructed by the greedy peeling algorithm that will be useful for an efficient implementation in the low-memory \mpc model. 

\begin{lemma}\label{HDecompProp}
The $H$-partition with out-degree $\param$, constructed by the greedy peeling algorithm, satisfies the following properties.
\begin{enumerate}[(i)]
\item For all $0\leq i \leq \ell$, the number $\left|\bigcup_{j=i}^{\ell}L_j\right|$ of nodes in layers with index $\geq i$ is at most $n \left(\frac{2\arb}{\param}\right)^{i-1}$. In other words, if we remove all nodes in layer $i$ from the set of nodes in layers $\geq i$, then the number of nodes drops by a factor of $\frac{2\arb}{\param}$, i.e., $\left|\bigcup_{j=i+1}^{\ell}L_j\right|\leq  \frac{2\arb}{\param}\left|\bigcup_{j=i}^{\ell}L_j\right|$ for all $0\leq i \leq \ell$. 
\item There are at most $\ell=O\left(\log_{\frac{\param}{\arb}}n\right)$ layers.
\end{enumerate}
\end{lemma}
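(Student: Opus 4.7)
The plan is to exploit the fundamental relationship between arboricity and average degree: any subgraph of a graph with arboricity $\arb$ has average degree at most $2\arb$, simply because it can be decomposed into $\arb$ forests, each contributing at most $|V|-1 < |V|$ edges. This is the single ingredient that drives both parts of the lemma.

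For property (i), I would set $V_i := \bigcup_{j=i}^{\ell} L_j$ and track how $|V_i|$ shrinks. The layer $L_i$ produced by the peeling algorithm consists of exactly those vertices of $G[V_i]$ whose degree in $G[V_i]$ is at most $\param$: these are the first-round nodes removed by peeling, and once they are removed the remaining graph is precisely $G[V_{i+1}]$. Since $G[V_i]$ has arboricity at most $\arb$, its edge count is bounded by $\arb|V_i|$, so the sum of degrees is at most $2\arb|V_i|$. By a simple counting (Markov) argument, the number of vertices with degree strictly greater than $\param$ in $G[V_i]$ is at most $2\arb|V_i|/\param$. These are exactly the vertices that survive into $V_{i+1}$, giving the one-step contraction
\[
|V_{i+1}| \leq \tfrac{2\arb}{\param}\,|V_i|.
\]
A straightforward induction on $i$, starting from $|V_1|=n$, then yields $|V_i| \leq n(2\arb/\param)^{i-1}$.

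For property (ii), I would just observe that $V_\ell$ is nonempty by definition, so $1 \leq |V_\ell| \leq n(2\arb/\param)^{\ell-1}$. Since the lemma assumes $\param > 2\arb$, the ratio $2\arb/\param$ is strictly less than $1$, and taking logarithms gives
\[
\ell - 1 \;\leq\; \log_{\param/(2\arb)} n \;=\; O\bigl(\log_{\param/\arb} n\bigr),
\]
which is the claimed bound.

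I do not anticipate a real obstacle here; the only thing to be slightly careful about is the off-by-one in the indexing (the base case $i=1$ must give the trivial bound $n$), and remembering that the peeling step removes all currently low-degree vertices simultaneously, so that $L_i$ is characterized purely in terms of degrees in $G[V_i]$ rather than of any sequential processing within a layer. Once this is set up, both (i) and (ii) fall out immediately from the arboricity-implies-average-degree bound.
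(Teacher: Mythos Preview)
Your proposal is correct and follows essentially the same approach as the paper: both arguments hinge on the fact that any subgraph of a graph with arboricity $\arb$ has average degree at most $2\arb$, which forces at most a $\frac{2\arb}{\param}$-fraction of the vertices in $G[V_i]$ to have degree exceeding $\param$ and hence survive into $V_{i+1}$. The only cosmetic difference is that you prove the one-step contraction $|V_{i+1}|\le \frac{2\arb}{\param}|V_i|$ directly via a Markov-type count and then induct, whereas the paper phrases the same step as a contradiction on the exponential bound; your version is slightly cleaner and in fact establishes the stronger ``in other words'' clause of (i) explicitly.
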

\begin{proof}
We prove (i) by induction, thus assume that there are $n_{i}\leq n \left(\frac{2\arb}{\param}\right)^{i-1}$ nodes in the graph $H_i$ induced by vertices in layers $\geq i$. Towards a contradiction, suppose that there are $n_{i+1}>n \left(\frac{2\arb}{\param}\right)^{i}$ nodes in layers $\geq i+1$. By construction, all these nodes must have had degree larger than $\param$ in $H_i$, as otherwise they would have been added to layer $i$. This results in an average degree of more than $\frac{n_{i+1} \param}{n_i} =2\arb$ in $H_i$, which contradicts the well-known upper bound of $2\arb$ on the average degree in a graph that has arboricty at most $\arb$. Note that (ii) is a direct consequence of (i). 
\end{proof}

In the following, we describe how to compute the $H$-partition with parameter $\param = \Delta^{1/10}$ in the low-memory \mpc model\footnote{Note that it is easy to learn the maximum degree, and hence $\param$, in $O(1)$ rounds of communications.}.
Throughout this section, we assume that $\Delta \geq (2\arb)^{20}$, i.e., that $\param \geq (2\arb)^2$.
Observe that if $\Delta^2 > n^\delta$, then the $H$-partition with parameter $\param = \Delta^{1/10}$ consists of $O(\log_{\param/\arb}n) = O(\log_{\Delta}n) = O(1/\delta)$ layers, in which case the arguments in this section imply that going through the layers one by one will easily yield at least as good runtimes as for the more difficult case of $\Delta^2 \leq n^\delta$.
Hence, throughout this section, we assume that $\Delta^2 \leq n^\delta$. 

The goal of the algorithm for computing the $H$-partition is that each node (or, more formally, the machine storing the node) knows in which layer of the $H$-partition it is contained.
The algorithm proceeds in iterations, where each iteration consists of two parts: first, the output, i.e., the layer index, is determined for a large fraction of the nodes, and second, these nodes are removed for the remainder of the computation. The latter ensures that the remaining small fraction of nodes can use essentially all of the total available memory in the next iteration, resulting in a larger memory budget per node.
However, there is a caveat: When the memory budget per node exceeds $\Theta(n^\delta)$, i.e., the memory capacity of a single machine, then it is not sufficient anymore to merely argue that the used memory of all nodes together does not exceed the total memory of all machines together\footnote{Furthermore, in the ``shuffle'' step of every MPC round~\cite{karloff2010model}, we assume that the nodes are stored in the machines in a balanced way, i.e., as long as a single node fits onto a single machine and the total memory is not exceeded, the underlying system takes care of load-balancing.}.
We circumvent this issue by starting the above process repeatedly from anew (in the remaining graph) each time the memory requirement per node reaches the memory capacity of a single machine.
As we will see, the number of repetitions, called \emph{phases}, is bounded by $O(1/\delta)$.

\newpage
In the following, we examine the phases and iterations in more detail.

\paragraph{Algorithm Details}
Let $k$ be the largest integer s.t.\ $\Delta^{2^k+1}\leq n^{\delta}$ (which implies that $k \geq 0$).
The algorithm consists of phases and each phase consists of $k+1$ iterations.
Next, we describe our implementation of the graph exponentiation in more detail.
\begin{itemize}
	\item  In each iteration $i = 0, 1, \ldots, k$, we do the following.
	\begin{itemize}
		\item Let $G_i = G_i^{(0)}$ be the graph at the beginning of iteration $i$. Each node connects its current $1$-hop neighborhood to a clique by adding virtual edges to $G_i$; if $i = 0$, omit this step. 
Perform $20$ repetitions of the following process if $i \geq 1$, and $60$ repetitions if $i=0$:
		\begin{itemize}
			\item In repetition $0 \leq j \leq 19$ (resp.\ $0 \leq j \leq 59$), each node computes its layer index in the $H$-partition of $G_i^{(j)}$ (with parameter $\param$) or determines that its layer index is strictly larger than $2^i$, upon which all nodes in layer at most $2^i$ (and all its incident edges) are removed from the graph, resulting in a graph $G_i^{(j+1)}$.
		\end{itemize}
		Set $G_{i+1} = G_i^{(20)}$ (resp.\ $G_{i+1} = G_i^{(60)}$ if $i=0$). 
	\end{itemize}
	At the end of the phase remove all added edges.
\end{itemize}
The algorithm terminates when each node knows its layer.

Note that each time a node is removed from the graph, the whole layer that contains this node is removed, and each time such a layer is removed, all layers with smaller index are removed at the same time or before.
By the definition of the $H$-partition, if we remove the $\ell$ layers with smallest index from a graph, then there is a 1-to-1 correspondence between the layers of the resulting graph and the layers with layer index at least $\ell+1$ of the original graph.
More specifically, layer $\ell'$ of the resulting graph contains exactly the same nodes as layer $\ell+\ell'$ of the original graph.
Hence, if a node knows its layer index in some $G_i^{(j)}$, it can easily compute its layer index in our original input graph $G$, by keeping track of the number of deleted layers, which is uniquely defined by $i$, $j$ and the number of the phase.
We implicitly assume that each node performs this computation upon determining its layer index in some $G_i^{(j)}$ and in the following only consider how to determine the layer index in the current graph.

\paragraph{Implementation in the \mpc Model}
Let us take a look at one iteration. 

Connecting the $1$-hop neighborhoods to cliques is done by adding the edges that are missing.
Edges that are added by multiple nodes are only added once (since the edge in question is stored by the machines that contain an endpoint of the edge, this is straightforward to realize).
Note that during a phase, the $1$-hop neighborhoods of the nodes grow in each iteration (if not too many close-by nodes are removed from the graph); more specifically, after $i$ iterations of connecting $1$-hop neighborhoods to cliques, the new $1$-hop neighborhood of a node contains exactly the nodes that were contained in its $2^i$-hop neighborhood at the beginning of the phase (and were not removed so far).

In iteration $i$, the layer of a node is computed as follows: First each node locally gathers the topology of its $2^i$-hop neighborhood (without any added edges).\footnote{Note that it is easy to keep track of which edges are original and which are added, incurring only a small constant memory overhead; later we will also argue why storing the added edges does not violate our memory constraints.}
Since this step is performed after connecting the $2^{i-1}$-hop neighborhood of each node to a clique (by repeatedly connecting $1$-hop neighborhoods to cliques), i.e., after connecting each node to any other node in its $2^i$-hop neighborhood, only $1$ round of communication is required for gathering the topology.
Moreover, since a node that knows the topology of its $2^i$-hop neighborhood can simulate any $(2^i - 1)$-round distributed process locally, it follows from the definition of the $H$-partition, that knowledge of the topology of the $2^i$-hop neighborhood is sufficient for a node to determine whether its layer index is at most $2^i$ and, if this is the case, in exactly which layer it is contained.
Thus, the only tasks remaining are to bound the runtime of our algorithm and to show that the memory restrictions of our model are not violated by the algorithm.

\paragraph{Runtime}
It is easy to see that every iteration takes $O(1)$ rounds. 
Thus, in order to bound the runtime of our algorithm, it is sufficient to bound the number of iterations by $O(1/\delta \cdot\log \log n)$.
	By \cref{HDecompProp} (ii) the number of layers in the $H$-partition of our original input graph $G$ is $O(\log_{\param/\arb}n)$, which is $O(\log_{\Delta}n)$ since $\param/(2\arb) \geq \sqrt{\param} = \Delta^{1/20}$.
	Consider an arbitrary phase.
	According to the algorithm description, in iteration $i \geq 1$, all nodes in the $20 \cdot 2^i$ lowest layers are removed from the current graph.
	Hence, ignoring iteration $0$, the number of removed layers doubles in each iteration, and we obtain that the number of layers removed in the $k+1$ iterations of our phase is $\Omega(2^k)$.
	By the definition of $k$, we have $\Delta^{2^{k+1}+1} > n^{\delta}$, which implies $2^k > 1/3 \cdot \delta \cdot \log_{\Delta}n$.

	Combining this inequality with the observations about the total number of layers and the number of layers removed per phase, we see that the algorithm terminates after $O(1/\delta)$ phases.
	Since there are $k+1 = O(\log \log n)$ iterations per phase, the bound on the number of iterations follows. 

\paragraph{Memory Footprint}
As during the course of the algorithm edges are added and nodes collect the topology of certain neighborhoods, we have to show that adding these edges and collecting these neighborhoods does not violate our memory constraints of $O(n^\delta)$ per machine.
As a first step towards this end, the following lemma bounds the number of nodes contained in graph $G_i$.

\begin{lemma} \label{lem:nodedecrease}
	Consider an arbitrary phase. Graph $G_i$ from that phase contains at most $n'/(\Delta^{2^i})$ nodes, for all $i \geq 1$, where $n' = n/\Delta$.
\end{lemma}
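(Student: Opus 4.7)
The plan is to prove the bound by induction on $i$, driven in each step by a direct application of \Cref{HDecompProp}(i). First I would record the standing inequalities. Our parameter choice $\param=\Delta^{1/10}$ combined with the hypothesis $\Delta \geq (2\arb)^{20}$ yields $\param \geq (2\arb)^2$, so
\[
\frac{2\arb}{\param} \;\leq\; \frac{1}{\sqrt{\param}} \;=\; \param^{-1/2} \;=\; \Delta^{-1/20}.
\]
Moreover, every intermediate graph $G_i^{(j)}$ is a subgraph of the input, hence has arboricity at most $\arb$, so \Cref{HDecompProp}(i) applies to it: peeling off the $\ell$ bottom layers of any such subgraph leaves at most a $(2\arb/\param)^{\ell}$-fraction of its nodes. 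I will use this shrinkage bound multiplicatively across repetitions within an iteration.

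Second, I would handle the base case $i=1$. Iteration $0$ of a phase performs $60$ repetitions, each of which computes the $H$-partition of the then-current graph and deletes exactly one layer (the lowest). Since $|G_0| \leq n$ and the per-repetition shrinkage bound composes, we get
\[
|G_1| \;\leq\; n \cdot \left(\frac{2\arb}{\param}\right)^{60} \;\leq\; n \cdot \param^{-30} \;=\; \frac{n}{\Delta^{3}} \;=\; \frac{n'}{\Delta^{2}},
\]
which is the claimed bound for $i=1$.

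Third, for the inductive step, I would assume $|G_i| \leq n'/\Delta^{2^i}$ and analyse iteration $i \geq 1$. Its $20$ repetitions each delete the lowest $2^i$ layers of the then-current graph, so, again by the composed shrinkage,
\[
|G_{i+1}| \;\leq\; |G_i| \cdot \left(\frac{2\arb}{\param}\right)^{20\cdot 2^i} \;\leq\; |G_i| \cdot \param^{-10\cdot 2^i} \;=\; |G_i| \cdot \Delta^{-2^i} \;\leq\; \frac{n'}{\Delta^{2^i + 2^i}} \;=\; \frac{n'}{\Delta^{2^{i+1}}},
\]
completing the induction.

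The only subtle point, and in my view the main obstacle, is legitimising the multiplicative composition of the per-repetition bound: the $H$-partition is recomputed from scratch in each repetition after the nodes from previously deleted layers (and their incident edges) are removed, so one must verify that \Cref{HDecompProp}(i) really applies to each current subgraph and that ``delete all nodes with layer index $\leq 2^i$ in the current $H$-partition'' in consecutive repetitions correctly corresponds to peeling $20 \cdot 2^i$ layers of the graph at the start of the iteration. Both points follow from the monotonicity of arboricity under subgraphs (so the hypothesis $\param > 2\arb$ needed by \Cref{HDecompProp} is preserved throughout) and the correspondence, already noted in the text, between the layers of the $H$-partition of a peeled graph and the higher-index layers of the original $H$-partition.
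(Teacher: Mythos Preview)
Your proof is correct and follows essentially the same approach as the paper: induction on $i$, using \Cref{HDecompProp}(i) together with $2\arb/\param \le \param^{-1/2}=\Delta^{-1/20}$ to bound the shrinkage from the $60$ layers peeled in iteration~$0$ and the $20\cdot 2^i$ layers peeled in iteration~$i\ge 1$. Your final paragraph on the legitimacy of composing the per-repetition shrinkage is a worthwhile clarification that the paper leaves implicit.
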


\begin{proof}
	By \cref{HDecompProp} (i), removing the nodes in the layer with smallest index from the current graph decreases the number of nodes by a factor of at least $\param/(2\arb) \geq \param^{1/2} = \Delta^{1/20}$.
	We show the lemma statement by induction.
	Since in iteration $0$ the nodes in the $60$ layers with smallest index are removed, we know that $G_1$ contains at most $n/(\Delta^3) = n'/(\Delta^{2^1})$ nodes.
	Now assume that $G_i$ contains at most $n'/(\Delta^{2^i})$ nodes, for an arbitrary $i \geq 1$.
	According to the design of our algorithm, $G_{i+1}$ is obtained from $G_i$ by removing the nodes in the $20 \cdot 2^i$ layers with smallest index.
	Combining this fact with our observation about the decrease in the number of nodes per removed layer, we obtain that $G_{i+1}$ contains at most $n'/(\Delta^{2^i}) \cdot 1/(\Delta^{2^i}) = n'/(\Delta^{2^{i+1}})$ nodes.
\end{proof}
Using \cref{lem:nodedecrease}, we now show that the memory constraints of the low-memory \mpc model are not violated by our algorithm.
Consider an arbitrary phase and an arbitrary iteration $i$ during that phase.
If $i=0$, then no edges are added and each node already knows the topology of its $2^i$-hop neighborhood, so no additional memory is required.
Hence, assume that $i \geq 1$.

Due to \cref{lem:nodedecrease}, the number of nodes considered in iteration $i$ is at most $n'/(\Delta^{2^i})$, where, again, $n' = n/\Delta$.
After the initial step of connecting $1$-hop neighborhoods to cliques in iteration $i$, each remaining node is connected to all nodes that were contained in its $2^i$-hop neighborhood in the original graph $G$ (and were not removed so far).
Hence, each remaining node is connected to at most $O(\Delta^{2^i})$ other nodes, resulting in a memory requirement of $O(\Delta^{2^i})$ per node, or $O(n/\Delta)$ in total.
Similarly, when collecting the topology of its $2^i$-hop neighborhood, each node has to store $O(\Delta^{2^i} \cdot \Delta)$ edges, which requires at most $O(\Delta^{2^i} \cdot \Delta)$ memory, resulting in a total memory requirement of $O(n)$.
Hence, the described algorithm does not exceed the total memory available in the low-memory \mpc model.
Moreover, due to the choice of $k$, the memory requirement of each single node does not exceed the memory capacity of a single machine.

\newpage
\subsection{Simulation of the Mark-and-Propose and Selection Phase}\label{sec:simul}
For the simulation of the mark-and-propose and selection phase, we rely heavily on the approach of \cref{sec:construction}.
Recall that nodes were removed in \emph{chunks} consisting of several consecutive layers and that before a node $v$ was removed, $v$ was directly connected to all nodes contained in a large neighborhood around $v$ by adding the respective edges.
For the simulation, we go through these chunks in the reverse order in which they were removed.
Note that in which chunk a node is contained is uniquely determined by the layer index of the node.
As each node computes its layer index during the construction of the $H$-partition, each node can easily determine in which part of the simulation it will actively participate.

However, there is a problem we need to address:
For communication, we would like the edges that we added during the construction of the $H$-partition to be available also for the simulation.
Unfortunately, during the course of the construction, we removed added edges again to free memory for the adding of other edges.
Fortunately, there is a conceptually simple way to circumvent this problem: in the construction of the $H$-partition, add a pre-processing step in the beginning, in which we remove the lowest $c \cdot \log(1/\delta \cdot \log \log n)$ layers (where $c$ is a sufficiently large constant) one by one in $\log(1/\delta \cdot \log \log n)$ rounds, which increases the available memory (compared to the number of (remaining) nodes) by a factor of $\Omega(1/\delta \cdot \log \log n)$, by \cref{HDecompProp}.
Since the algorithm for constructing the $H$-partition consist of $O(1/\delta \cdot \log \log n)$ iterations, this implies that we can store all edges that we add during the further course of the construction simultaneously without violating the memory restriction, by an argument similar to the analogous statement for the old construction of the $H$-partition.
Similarly, also the number of added edges incident to one particular node does not exceed the memory capacity of a single machine.
In the following, we assume that this pre-processing step took place and all edges added during the construction of the $H$-partition are also available for the simulation.

\paragraph{Matching Algorithm}
As mentioned above, we process the chunks one by one, in decreasing order w.r.t.\ the indices of the contained layers.
After processing a chunk, we want each node contained in the chunk to know the output of all incident edges according to the centralized matching algorithm. In the following, we describe how to process a chunk, after some preliminary ``global" steps.

The mark-and-propose phase of the algorithm is straightworward to implement in the low-memory \mpc model: each node (in each chunk at the same time) performs the marking of an outgoing edge as specified in the algorithm description.\footnote{Note that, formally, the algorithm for construction the $H$-partition only returns the layer index for each node; however, from this information each node can easily determine which edges are outgoing, unoriented, or incoming according to the partial orientation induced by the $H$-partition.}
The proposing is performed for all nodes before going through the chunks sequentially: each node proposes one of its marked incoming edges (it there is at least one) uniformly at random.
Note that proposes an edge does not necessarily indicate that this edge will be added to the matching; more specifically, an edge proposed by some node $v$ will be added to the matching iff the edge that $v$ marked is not selected to be added to the matching.\footnote{In other words, only proposed edges can go into the matching and whether such an edge indeed goes into the matching can be determined by going through the layers in decreasing order and only adding a proposed edge if there is no conflict.}

After this mark-and-propose phase, the processing of the chunks begins.
Consider an arbitrary chunk.
Let $i$ be the iteration (in some phase) in which this chunk was removed in the construction of the $H$-partition, i.e., the chunk consists of $2^i$ layers.  
Each node in the chunk collects the topology of its $2^i$-hop neighborhood in the chunk including the information contained therein about proposed edges.
Due to the edges added during the construction of the $H$-partition, this can be achieved in a constant number of rounds, and by an analogous argument to the one at the end of \cref{sec:construction}, collecting the indicated information does not violate the memory restrictions of our model.
\cref{lem:matchingimp} shows that the information contained in the $2^i$-hop neighborhood of a node is sufficient for the node to determine the output for each incident edge in the centralized matching algorithm.

\begin{lemma} \label{lem:matchingimp}
	The information about which edges are proposed in the $2^i$-hop neighborhood of a node $v$ uniquely determines the output of all edges incident to $v$ according to the centralized matching algorithm.
\end{lemma}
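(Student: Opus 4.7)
The plan is to show that the matching status of each edge incident to $v$ is a deterministic function of the proposed-edge labels along a short upward walk in the \emph{marked-outgoing forest} (the forest obtained by orienting each node's marked outgoing edge toward its endpoint), and that this walk stays inside $v$'s $2^i$-hop neighborhood because the chunk has only $2^i$ layers. Write $p(x)$ for the other endpoint of $x$'s marked outgoing edge, if any.

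I would begin by cutting the number of relevant edges down to two. An unoriented same-layer edge is never marked, hence never proposed, hence never added to the matching. Among $v$'s outgoing edges only $(v, p(v))$ can ever be proposed, since only marked edges can be proposed. Among $v$'s incoming edges only the unique edge $(w^*, v)$ that $v$ proposed during the mark-and-propose phase can ever be selected, since $v$'s layer is the only layer at which an incoming edge of $v$ is even considered. Both candidate edges sit in $v$'s $1$-hop neighborhood.

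Next I would set up a recursive characterisation of \emph{aliveness}, where $x$ is called alive if $x$ has not been removed from the graph by the moment $x$'s own layer is processed. By induction on decreasing layer index I would show that $x$ is alive iff either $p(x)$ is undefined, or $(x, p(x))$ was not proposed, or $p(x)$ is itself not alive. The reason is that $x$'s unique marked outgoing edge $(x, p(x))$ is the only edge at $x$ that any higher-layer processing step can ever add to the matching; and the selection rule matches $(x, p(x))$ iff $p(x)$ is alive at $p(x)$'s processing time (the child $x$ is trivially alive then, for the same reason). Unrolling this recursion produces an upward walk $v = x_0, x_1 = p(v), x_2 = p(x_1), \ldots$ climbing at least one layer per step, so within a chunk of $2^i$ layers it has length at most $2^i$. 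Consequently the aliveness of $v$ and of $p(v)$ is determined by the proposed-edge labels on at most $2^i$ edges, all lying in the $2^i$-hop neighborhood of $v$.

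Putting the pieces together, $(v, p(v))$ enters the matching iff it is proposed and $p(v)$ is alive; $(w^*, v)$ enters the matching iff $v$ is alive (the child $w^*$ is again automatically alive at the instant $v$'s layer is processed). All other edges at $v$ are unmatched. Each of these conditions is a function of the proposed labels within the $2^i$-hop neighborhood of $v$. The delicate point I anticipate as the main obstacle is a walk that exits the chunk through its top layer: the recursion then bottoms out at a node whose aliveness was decided when the higher chunk was processed. Because chunks are processed in decreasing order of layer index, this boundary value is already available and can simply be substituted into the base case, so no information beyond the $2^i$-hop neighborhood (together with the already-computed outputs of higher chunks) is required.
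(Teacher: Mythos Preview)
Your proof is correct and follows essentially the same approach as the paper: both observe that the matching status of any edge incident to $v$ is determined by following the unique upward chain of marked edges from $v$, which has length at most $2^i$ within a chunk of $2^i$ layers. Your framing via node ``aliveness'' is a clean reparametrization of the paper's direct recursion on edge matching status, and you spell out the chunk-boundary case (outputs of higher chunks already known) that the paper leaves implicit.
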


\begin{proof}
	From the design of the centralized matching algorithm, it follows that an edge is part of the matching iff 1) the edge is proposed and 2) either the higher-layer endpoint of the edge has no outgoing edges or the outgoing edge marked by the higher-layer endpoint is not part of the matching.
	Hence, in order to check whether an incident edge is in the matching, node $v$ only has to consider the unique directed chain of proposed edges (in the chunk) starting in $v$.
	Clearly, the information which of the edges in this chain are proposed uniquely defines the output of the first edge in the chain, from which $v$ can infer the output of all other incident edges.
	Since the number of edges in the chain is bounded by $2^i - 1$ as the chain is directed, the lemma statement follows.
\end{proof}
It thus follows from the bound on the number of iterations that the simulation of the selection phase for the matching algorithm can be performed in $O(1/\delta \cdot\log \log n)$ rounds of communication.

\paragraph{Independent Set Algorithm}
The simulation of the independent set algorithm proceeds analogously to the case of the matching algorithm.
First, each node performs the marking and proposing in a distributed fashion in a constant number of rounds.
Then, the chunks are processed one by one, as above, where during the processing of a chunk removed in iteration $i$, each node contained in the chunk collects its $2^i$-hop neighborhood, including the information about which nodes are proposed, and then computes its own output locally.
By analogous arguments to the ones presented in the case of the matching algorithm, the algorithm adheres to the memory constraints of our model and the total number of communication rounds is $O(1/\delta \cdot\log \log n)$.
The only part of the argumentation where a bit of care is required is the analogue of \cref{lem:matchingimp}:
In the case of the independent set algorithm the output of a node $v$ may depend on \emph{each} of its parents since each of those could be part of the independent set, which would prevent $v$ from joining the independent set.
However, \emph{all} nodes in the chunk that can be reached from $v$ via a directed chain of edges are contained in $v$'s $2^i$-hop neighborhood; therefore, collecting the own $2^i$-hop neighborhood is sufficient for determining one's output.
Note that at the end of processing a chunk, if we follow the above implementation, we have to spend an extra round for removing the neighbors of all selected independent set nodes since these may be contained in another chunk.

\bibliographystyle{alpha}
\bibliography{ref}

\end{document}